\documentclass[11pt,centertags,reqno,twoside]{amsart}
\usepackage{amsmath,latexsym, graphicx}
\usepackage[psamsfonts]{amssymb}
\usepackage{mathptmx}
\usepackage[bookmarks]{hyperref}
\usepackage[mathcal]{euscript}
\usepackage{xcolor}
\usepackage[normalem]{ulem}
\usepackage{amsfonts}
\usepackage{amssymb}
\usepackage{amsthm}
\usepackage{bm}
\usepackage{enumitem}
\usepackage[english]{babel}
\usepackage[bookmarks]{hyperref}

\numberwithin{equation}{section}

\renewcommand{\epsilon}{\varepsilon}

\newcommand{\be}{\begin{equation}}
\newcommand{\ee}{\end{equation}}

\newcommand{\C}{\mathbb{C}}

\newcommand{\R}{\mathbb{R}}

\newcommand{\T}{\mathbb{T}}

\newcommand{\Z}{\mathbb{Z}}

{\bf}{\it}
\newtheorem{theorem}{Theorem}[section]
\newtheorem{lemma}[theorem]{Lemma}
\newtheorem{corollary}[theorem]{Corollary}

\newtheorem{definition}[theorem]{Definition}

\newtheorem{remark}[theorem]{Remark}

\date{\today}
\begin{document}

\title[Threshold spectral transition for a fermion--boson... ]{Threshold spectral transition for a fermion-boson pair on the one-dimensional lattice}

\author{S.S.Ulashov, Sh.I.Khamidov}

\address[S.S. Ulashov]{Samarkand State University, 140104, Samarkand, Uzbekistan}
\email{sobirulashov19@gmail.com}

\address[Sh.I. Khamidov]{V.I. Romanovskiy Institute of Mathematics, UzAS,Tashkent, Uzbekistan}
\email{shoh.hamidov2021@gmail.com}

\begin{abstract}
We study a two-particle lattice Schr\"odinger operator describing a
fermion-boson pair on the one-dimensional lattice $\mathbb Z$ with a
zero-range on-site interaction of strength $\mu\in\mathbb R$ and mass
ratio $\gamma>0$. Using relative coordinates, the fiber decomposition with
respect to the total quasi-momentum
$k\in\mathbb T:=(-\pi,\pi]$, and a unitary phase transformation, we obtain
the real symmetric fiber Hamiltonian $H_{\mu,\gamma}(k)$ in
$\ell^2(\mathbb Z)$. Its essential spectrum is
\[
\sigma_{\mathrm{ess}}\bigl(H_{\mu,\gamma}(k)\bigr)
=
\bigl[
2(1+\gamma)-2a_\gamma(k),
\,
2(1+\gamma)+2a_\gamma(k)
\bigr],
\]
where
\(
a_\gamma(k)
=
\sqrt{1+2\gamma\cos k+\gamma^2}.
\)

For  \(a_{\gamma}(k)>0\) and  $\mu\neq0$, the operator has a unique simple discrete
eigenvalue
\[
E_\gamma(k,\mu)
=
2(1+\gamma)
+
\operatorname{sgn}(\mu)
\sqrt{\mu^2+4a_\gamma(k)^2},
\]
and the corresponding eigenfunction is explicitly determined and decays exponentially. The eigenvalue
lies below the essential spectrum for $\mu<0$ and above it for $\mu>0$.
At the exceptional fiber \(a_{\gamma}(k)=0\), which occurs precisely at
\((\gamma,k)=(1,\pi)\), the essential spectral band collapses to
\(\{4\}\), and the unique simple eigenvalue is
\(
E_{1}(\pi,\mu)=4+\mu.
\)

We characterize the threshold spectral transition at the critical coupling $\mu=0$. For $a_\gamma(k)>0,$ both edges
\(
E_{\mathrm{thr}}^{\pm}(k)
=
2(1+\gamma)\pm2a_\gamma(k)
\)
 are threshold resonances of $H_{0,\gamma}(k)$, with explicitly
determined resonant solutions
\(
u_{\mathrm{res}}^{\pm}
\in
\ell^\infty(\mathbb Z)\setminus\ell^2(\mathbb Z).
\)
As $\mu\to0^{\pm}$, the discrete eigenvalue converges to the corresponding
threshold and satisfies
\[
\left|
E_\gamma(k,\mu)-E_{\mathrm{thr}}^{\pm}(k)
\right|
=
\frac{\mu^2}{4a_{\gamma}(k)}+O(\mu^{4}),
\]
while, after normalization at the origin, the corresponding eigenfunction
converges to the associated resonant solution.
In the strong-coupling regime, the eigenvalue is asymptotic to
\(
\mu+2(1+\gamma),
\)
while the normalized eigenfunction becomes localized at the interaction
site.

We also establish a complete asymptotic classification near the
exceptional fiber. In the joint limit
\[
\mu\to0^{\pm},
\qquad
(\gamma,k)\to(1,\pi),
\]
the threshold behavior of the discrete eigenvalue and the corresponding
eigenfunction is determined by the relative scale of \(|\mu|\) and
\(a_{\gamma}(k)\). This reveals the nonuniformity of the weak-coupling threshold asymptotics and describes the transition from threshold-resonant behavior to localization at the interaction site.

\end{abstract}

\maketitle

Subject Classification: {Primary: 81Q10, Secondary: 35P20, 47N50}

Keywords: {Lattice Schr\"odinger operator, fermion-boson pair, on-site interaction,
bound state, threshold resonance, discrete spectrum, relative-coordinate.}

\section{Introduction}
Lattice Schr\"odinger operators form a fundamental class of discrete models in mathematical physics. They arise naturally in the description of quantum particles moving in periodic media and provide a mathematical framework for the study of few-particle systems in optical lattices and solid-state physics. Their spectral properties depend on the
interaction between the particles, the geometry and dimension of the
lattice, the particle masses, and the total quasi-momentum. Even in the
two-particle case, this interplay may produce isolated eigenvalues,
virtual levels, and resonance phenomena at the edges of the essential
spectrum.

In the continuous setting, threshold phenomena for few-particle Schr\"odinger operators are closely connected with scattering theory, virtual levels, and the behavior of the resolvent in a neighborhood of the spectral thresholds; see, for example, \cite{Yaf74,FMerkuriev:1993}. Similar phenomena occur for lattice Schr\"odinger operators. However, in contrast to the continuous case, the lattice dispersion relation is bounded and therefore gives rise to both lower and upper edges of the essential spectrum. Consequently, discrete eigenvalues may occur both below and above the essential spectral band. The latter possibility is specific to lattice systems and is related to the existence of repulsively bound states, which have also been observed experimentally for atom pairs in optical lattices \cite{Nature}.

The spectral theory of few-particle lattice systems  was systematically studied by Mattis \cite{Mattis:1986}; see also \cite{Mogilner:1989} for related quasiparticle models in solid-state physics. Two-particle states in Hubbard-type lattice models and the formation of bound pairs were studied in \cite{ValientePetrosyan:2008}.

Two-particle lattice Schr\"odinger operators have been studied extensively with respect to the existence, number, and location of bound states, threshold effects, and the dependence of the discrete spectrum on the total quasi-momentum; see \cite{ALMM:2006,LakaevUlashov2012,BachLakaevPedra:2017,LakaevBachPedra:2020,KholmatovLakaevAlmuratov:2020}. Models with on-site and finite-range interactions, including nearest- and next-nearest-neighbor couplings, were investigated in \cite{LakaevKholmatovKhamidov2021,UlashovKhamidovLakaev2024,LakaevKhamidovAkhmadova2024,AkhmadovaAzizova2025,LakaevLatipovaAkhmadova2025}. In the fermionic case, the number and location of discrete eigenvalues for nearest-neighbor interactions were studied in \cite{LKh2009}, while two-fermion Hamiltonians with first- and second-nearest-neighbor interactions were considered in \cite{LakaevMotovilovAbdukhakimov:2023, LakaevAbdukhakimovKhasanov2026}.

We consider a fermion--boson pair on the one-dimensional lattice with a zero-range on-site interaction of strength $\mu.$
The particles are allowed to have different masses, with mass ratio $\gamma>0.$ Our approach is based on the relative-coordinate representation. After the direct-integral decomposition with respect to the total
quasi-momentum \(k\in\mathbb T:=(-\pi,\pi],\)
a unitary phase transformation reduces
the corresponding fiber Hamiltonian in $\ell^2(\mathbb Z)$ to the real symmetric
form
\[
\bigl(H_{\mu,\gamma}(k)u\bigr)(r)
=
2(1+\gamma)u(r)
-a_{\gamma}(k)\bigl[u(r+1)+u(r-1)\bigr]
+\mu\delta_{r,0}u(r),
\qquad r\in\mathbb Z,
\]
where
\(
a_{\gamma}(k)
=
\sqrt{1+2\gamma\cos k+\gamma^2}.
\)
Thus, for $a_{\gamma}(k)>0$, the fiber Hamiltonian is a constant-coefficient
Jacobi operator in $\ell^2(\mathbb Z)$ with a rank-one perturbation at the
origin.

The spectral theory of Jacobi operators is closely connected with
orthogonal-polynomial methods and provides effective tools for studying
discrete eigenvalues, resonances, and threshold asymptotics; see
\cite{Simon2005a,Simon2005b,Teschl1999}. A relative-coordinate approach
to two-particle lattice systems with finite-range interactions was
developed in \cite{Valiente2010}. Threshold virtual states for
Schr\"odinger operators were studied in \cite{Yafaev1979},
while related threshold phenomena for Jacobi operators with finite-rank perturbations
were recently investigated in \cite{LMakarov:2026}.

The two-particle model considered in the present paper coincides, in the one-dimensional setting, with the corresponding fermion-boson model studied in \cite{LakaevUlashov2012}. The emphasis of the present work is, however, different. Whereas \cite{LakaevUlashov2012} is concerned primarily with the existence and analyticity of bound states, the present paper provides an explicit description of the spectral and threshold structure of the associated fiber Hamiltonians, including the threshold resonances and the weak- and strong-coupling asymptotics.

Related asymptotic problems for two-particle lattice Schr\"odinger operators
with zero-range interactions were studied in
\cite{LakaevKholmatov2011,LakaevKholmatov2012}, where discrete eigenvalues
were analyzed near coupling-constant thresholds and under variations of the
total quasi-momentum. In contrast, at the exceptional fiber
\((\gamma,k)=(1,\pi)\), the hopping coefficient \(a_{\gamma}(k)\) vanishes
and the essential spectral band collapses to a single point, giving rise to
a nonuniform weak-coupling regime governed by the relative scale of
\(|\mu|\) and \(a_{\gamma}(k)\).

For every $k\in\mathbb T$, the essential spectrum of
$H_{\mu,\gamma}(k)$ is given by
\[
\sigma_{\mathrm{ess}}\bigl(H_{\mu,\gamma}(k)\bigr)
=
\bigl[
2(1+\gamma)-2a_{\gamma}(k),
\,
2(1+\gamma)+2a_{\gamma}(k)
\bigr].
\]
 For every $\mu\neq0$, the operator
$H_{\mu,\gamma}(k)$ has a unique simple discrete eigenvalue
\[
E_{\gamma}(k,\mu)
=
2(1+\gamma)
+
\operatorname{sgn}(\mu)
\sqrt{\mu^2+4a_{\gamma}(k)^2}.
\]
For attractive interaction, $\mu<0$, this eigenvalue lies below the
essential spectrum, whereas for repulsive interaction, $\mu>0$, it lies
above the essential spectrum. In the nondegenerate case
$a_{\gamma}(k)>0$,
the corresponding
eigenfunction is obtained explicitly and decays exponentially as
\(|r|\to\infty\).

At the exceptional fiber $a_{\gamma}(k)=0$, which occurs precisely when
$\gamma=1$ and $k=\pi$,
the essential spectral band reduces to the single point $4$, while the discrete eigenvalue is
\[
E_{1}(\pi,\mu)=4+\mu,
\]
and the corresponding eigenfunction is supported at the interaction site.

Particular attention is paid to a complete characterization of
the threshold spectral transition of the fiber Hamiltonian $H_{\mu,\gamma}(k)$
at the critical coupling $\mu=0,$ for $a_{\gamma}(k)>0.$ As
$\mu\to0^{\pm},$
the discrete eigenvalue converges to the corresponding edge of the essential spectrum, namely,
\[
E_{\mathrm{thr}}^{\pm}(k)
=
2(1+\gamma)\pm2a_{\gamma}(k).
\]
The threshold equations
\[
H_{0,\gamma}(k)u
=
E_{\mathrm{thr}}^{\pm}(k)u
\]
admit nonzero bounded solutions given explicitly by
\[
u_{\mathrm{res}}^{\pm}(r)=C(\mp 1)^{r},\qquad C\neq0,
\qquad r\in\Z.
\]
 Hence,
\[
u_{\mathrm{res}}^{\pm}
\in
\ell^{\infty}(\mathbb Z)\setminus\ell^2(\mathbb Z),
\]
and both edges of the essential spectrum are threshold resonances of $H_{0,\gamma}(k).$

The approach to the thresholds is further quantified by the asymptotic
behavior of the eigenvalue and eigenfunction. For $a_{\gamma}(k)>0,$ as $\mu\to0^{\pm},$ one has
\[
\left|
E_{\gamma}(k,\mu)-E_{\mathrm{thr}}^{\pm}(k)\right|=\frac{\mu^2}{4a_{\gamma}(k)}+O(\mu^{4}),
\]
while, after normalization at the origin, the corresponding eigenfunction
converges to the resonant solution $u_{\mathrm{res}}^{\pm}$.
In the opposite strong-coupling regime, as $|\mu|\to\infty$,
\[
E_{\gamma}(k,\mu)
=
\mu+2(1+\gamma)+O\bigl(|\mu|^{-1}\bigr).
\]
Consequently, the normalized eigenfunction satisfies
\[
u(r)\longrightarrow \delta_{r,0},
\qquad |\mu|\to\infty,
\]
and hence becomes localized at the interaction site.

A distinct asymptotic regime emerges when the weak-coupling limit is combined with the approach to the exceptional fiber.
Since the coefficient
$$
\frac{1}{4a_{\gamma}(k)}
$$
in the threshold asymptotic formula becomes singular as
\(a_{\gamma}(k)\to0\), the fixed-fiber asymptotics are not uniform near
\((\gamma,k)=(1,\pi)\). We therefore analyze the joint limit
\[
\mu\to0^{\pm},
\qquad
(\gamma,k)\to(1,\pi),
\]
and show that the threshold behavior of both the discrete eigenvalue and the corresponding eigenfunction is determined by the relative magnitude of \(|\mu|\) and \(a_{\gamma}(k)\). This yields three distinct scaling regimes near the exceptional fiber.

The paper is organized as follows. In Section~\ref{sec:model}, we introduce the two-particle fermion-boson Hamiltonian on the one-dimensional lattice. In Section~\ref{sec:reduction}, we pass to relative coordinates, derive the direct-integral decomposition with respect to the total quasi-momentum, and reduce the fiber operators by a unitary phase transformation to real symmetric discrete Schr\"odinger operators. Section~\ref{sec:discrete-spectr} is devoted to the spectral analysis of the fiber Hamiltonians. We determine the essential and discrete spectra, derive the corresponding eigenfunctions and strong-coupling asymptotics, characterize the threshold resonance and the weak-coupling threshold asymptotics, and analyze the joint threshold behavior near the exceptional fiber.

\section{A model operator for a two-particle fermion-boson system on a lattice}\label{sec:model}

\subsection{Two-particle Hamiltonian in position space}
Let $\mathbb Z^d$ be the $d$-dimensional integer lattice, and let $(\mathbb Z^d)^m$, $m\in\mathbb N$, denote the $m$-fold Cartesian product of $\mathbb Z^d$. We denote by $\ell^2((\mathbb Z^d)^m)$ the Hilbert space of all square-summable complex-valued functions on $(\mathbb Z^d)^m$.
In this paper we focus on the one-dimensional lattice $\Z$.
The free two-particle Hamiltonian corresponding to the fermion-boson subsystem, with mass ratio $\gamma>0,$ is defined as an operator on $\ell^2(\Z^2)$ by
\begin{equation*}
(\widehat h_{0,\gamma}\widehat{\psi})(x_1,x_2)
=
\sum_{|s|\le1}\widehat{\varepsilon}(s)\Bigl[
\widehat{\psi}(x_1+s,x_2)
+ \gamma\,\widehat{\psi}(x_1,x_2+s)\Bigr],
\end{equation*}
where the function $\widehat{\varepsilon}(s)$ is given by
\begin{equation*}
\widehat{\varepsilon}(s) =
\left\{\!\!\!\!
\begin{array}{rl}
2,          & \text{if } s = 0, \\
-1, & \text{if } |s| = 1, \\
0,          & \text{if } |s| > 1.
\end{array}
\right.
\end{equation*}

We introduce a zero-range on-site interaction of strength $\mu\in\R$ and define the interacting Hamiltonian by
\begin{equation*}
\widehat{h}_{\mu,\gamma} = \widehat h_{0,\gamma} + \mu \widehat{V},
\end{equation*}
where the interaction operator $\widehat {V}$ acts as
\begin{equation*}
(\widehat {V}\widehat{\psi})(x_1,x_2)
= \delta_{x_1,x_2}\,\widehat{\psi}(x_1,x_2)
\end{equation*}
and $\delta_{x_1,x_2}$ denotes the Kronecker delta.

\section{Reduction to relative coordinates and direct integral decomposition}\label{sec:reduction}
In this section, we use the translational invariance of the two-particle Hamiltonian to derive its direct-integral decomposition. This yields a family of fiber operators parametrized by the total quasi-momentum $k\in\T.$ By  unitary phase transformation, we further reduce each fiber operator to a discrete Schr\"odinger operator with real symmetric hopping coefficients.

\subsection{Diagonal Translation Invariance and Relative Coordinates}
For each $t\in\Z$, define the diagonal  translation operator
$$T_t:\ell^{2}(\Z^2)\to\ell^{2}(\Z^2)$$
by
\begin{equation*}
(T_t\widehat{\psi})(x_1,x_2)=\widehat{\psi}(x_1+t,x_2+t),  \quad (x_1,x_2)\in\Z^2.
\end{equation*}
\begin{lemma}\label{unitaryT}
The family $\{T_t\}_{t \in \Z}$ is a unitary representation of  $\Z$ on $\ell^2(\Z^2)$, and commutes with $\widehat{h}_{\mu,\gamma\,},$ that is, $$ T_t \widehat{h}_{\mu,\gamma} = \widehat{h}_{\mu,\gamma} T_t, \quad t \in \Z. $$
\end{lemma}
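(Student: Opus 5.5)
We split the statement into three parts: each $T_t$ is unitary, the map $t\mapsto T_t$ is a group homomorphism, and each $T_t$ commutes with $\widehat h_{\mu,\gamma}$. Everything reduces to checking identities pointwise on $\ell^2(\Z^2)$.

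\emph{Unitarity and representation property.} The map $(x_1,x_2)\mapsto(x_1+t,x_2+t)$ is a bijection of $\Z^2$. Hence $T_t$ only reindexes the summation in $\|\widehat\psi\|^2=\sum_{x\in\Z^2}|\widehat\psi(x)|^2$, so it is an isometry. It is also surjective, since directly from the definition
\[
T_tT_s\widehat\psi(x_1,x_2)=T_s\widehat\psi(x_1+t,x_2+t)=\widehat\psi(x_1+t+s,x_2+t+s)=T_{t+s}\widehat\psi(x_1,x_2).
\]
This gives $T_tT_s=T_{t+s}$ and $T_0=I$, so $T_{-t}=T_t^{-1}$. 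Therefore $T_t$ is unitary with $T_t^*=T_{-t}$, and $t\mapsto T_t$ is a unitary representation of $\Z$. Strong continuity is automatic because $\Z$ is discrete.

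\emph{Commutation with the free part.} We use that $\widehat h_{\mu,\gamma}$ is bounded: $\widehat h_{0,\gamma}$ is a finite sum of shifts with bounded coefficients, and $\widehat V$ is a multiplication by a bounded function. So no domain issues arise. Let $S_1^s,S_2^s$ denote the shifts in the first and second coordinates. The diagonal translation $T_t$ acts as $S_1^tS_2^t$, and all coordinate shifts commute with one another. Explicitly,
\[
(T_t\widehat h_{0,\gamma}\widehat\psi)(x_1,x_2)=\sum_{|s|\le1}\widehat\varepsilon(s)\bigl[\widehat\psi(x_1+t+s,x_2+t)+\gamma\widehat\psi(x_1+t,x_2+t+s)\bigr].
\]
This is exactly $(\widehat h_{0,\gamma}T_t\widehat\psi)(x_1,x_2)$.

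\emph{Commutation with the interaction.} This is the only point where diagonality of the translation matters:
\[
(T_t\widehat V\widehat\psi)(x_1,x_2)=\delta_{x_1+t,x_2+t}\widehat\psi(x_1+t,x_2+t)=\delta_{x_1,x_2}(T_t\widehat\psi)(x_1,x_2)=(\widehat VT_t\widehat\psi)(x_1,x_2).
\]
It holds because $x_1+t=x_2+t$ if and only if $x_1=x_2$. Adding $\mu$ times this identity to the free-part identity gives $T_t\widehat h_{\mu,\gamma}=\widehat h_{\mu,\gamma}T_t$.

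There is no genuine obstacle here. The only point requiring care is noting that the interaction commutes with $T_t$ only because the translation is diagonal; a translation of a single coordinate would not preserve the set $\{x_1=x_2\}$.
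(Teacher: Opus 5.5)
Your proof is correct and follows essentially the same route as the paper: isometry via reindexing by a bijection of $\Z^2$, the group law $T_tT_s=T_{t+s}$, and commutation checked separately for $\widehat h_{0,\gamma}$ and for $\widehat V$ (via $\delta_{x_1+t,x_2+t}=\delta_{x_1,x_2}$). Your explicit pointwise computation for the free part only spells out what the paper asserts in one line from the translation invariance of the hopping coefficients.
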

\begin{proof}
For any $t,s\in\mathbb Z$, we have
$$
T_tT_s=T_{t+s},\qquad T_0=I,\qquad T_t^{-1}=T_{-t}.
$$
Moreover, using the change of variables
$$
y_i=x_i+t,\qquad i=1,2,
$$
we obtain
$$
\begin{aligned}
\|T_t\widehat\psi\|_{\ell^2(\mathbb Z^2)}^2
=
\sum_{x_1,x_2\in\mathbb Z}
\bigl|\widehat\psi(x_1+t,x_2+t)\bigr|^2
=
\sum_{y_1,y_2\in\mathbb Z}
\bigl|\widehat\psi(y_1,y_2)\bigr|^2
=
\|\widehat\psi\|_{\ell^2(\mathbb Z^2)}^2.
\end{aligned}
$$
Hence $T_t$ is unitary for every $t\in\mathbb Z$, and the family
$\{T_t\}_{t\in\mathbb Z}$
forms a unitary representation of the group $\mathbb Z$.

Since the free operator $\widehat h_{0,\gamma}$ has
translation-invariant hopping coefficients, it is invariant under
simultaneous translations of both coordinates. Therefore,
$$
T_t\widehat h_{0,\gamma}
=
\widehat h_{0,\gamma}T_t.
$$
Furthermore,
$$
\delta_{x_1+t,x_2+t}
=
\delta_{x_1,x_2},
$$
and hence the interaction operator $\widehat V$ also commutes with $T_t$:
$$
T_t\widehat V=\widehat VT_t.
$$
Consequently, linearity of $T_t$ implies
$$\begin{aligned}
T_t\widehat h_{\mu,\gamma}
=
T_t\bigl(\widehat h_{0,\gamma}+\mu\widehat V\bigr)
=
\bigl(\widehat h_{0,\gamma}+\mu\widehat V\bigr)T_t
=
\widehat h_{\mu,\gamma}T_t.
\end{aligned}
$$
Thus,
$$
T_t\widehat h_{\mu,\gamma}
=
\widehat h_{\mu,\gamma}T_t,
\qquad t\in\mathbb Z.
$$
\end{proof}

We introduce new lattice coordinates
$$
R := x_2, \qquad r := x_1 - x_2 ,
$$ where $R$
is the position of the second particle on the lattice and
$r$ is a \textbf{\textit{relative coordinate}} (the difference between the particle coordinates).
This change of variables defines a unitary operator
$U_0:\ell^2(\Z^2)\to\ell^2(\Z^2)$ by
\begin{equation*}
(U_0\widehat{\psi})(R,r)=\widehat{\psi}(R+r,R).
\end{equation*}
The inverse operator is given by
\begin{equation*}
(U_0^{-1}\widehat{\varphi})(x_1,x_2)=\widehat{\varphi}(x_2, x_1-x_2).
\end{equation*}

In the $(R,r)$-coordinates, the diagonal translations reduce to
translations in the variable $R$. For every
$\widehat\varphi\in\ell^2(\mathbb Z^2)$ and $t\in\mathbb Z$,
$$
\begin{aligned}
(U_0T_tU_0^{-1}\widehat\varphi)(R,r)
=
(T_tU_0^{-1}\widehat\varphi)(R+r,R)
=
(U_0^{-1}\widehat\varphi)(R+r+t,R+t)
=
\widehat\varphi(R+t,r).
\end{aligned}
$$
Thus, in the $(R,r)$-representation, $T_t$ acts as a shift in the coordinate $R$ and leaves the relative coordinate $r$ unchanged.

\subsection{Direct-integral decomposition.}
We define the partial Fourier transform $$\mathcal{F}_{R}:\ell^2(\Z^2)\to L^2(\T;\ell^2(\Z)),\quad \T=(-\pi,\pi]$$ by
\begin{equation*}
(\mathcal{F}_{R}\widehat{\varphi})(k,r)=\frac{1}{\sqrt{2\pi}}\sum_{R\in\Z}e^{-i kR}\widehat{\varphi}(R,r),
\qquad k\in\T,
\end{equation*}
where
$k\in\mathbb T$ is the  {\it quasi-momentum} of the two--particle system.

The partial Fourier transform satisfies the following translation relation:
\begin{equation}\label{fouriertranslation}
\Bigl(\mathcal{F}_{R}(U_0T_tU_0^{-1})\widehat{\varphi}\Bigr)(k,r)=e^{ikt}(\mathcal{F}_{R}\widehat{\varphi})(k,r).
\end{equation}
for almost every $k\in\mathbb T$ and every $r\in\mathbb Z$.

\begin{lemma}\label{directintegral}
Let $U:=\mathcal F_{R}U_{0}.$ The  Hamiltonian $\widehat{h}_{\mu,\gamma\,}$ admits the direct-integral decomposition
$$
U\widehat{h}_{\mu,\gamma}U^{-1}=\int_{\T}^{\oplus} h_{\mu,\gamma}(k)\,{dk},
$$
in the Hilbert space
$
L^{2}\bigl(\mathbb T;\ell^{2}(\mathbb Z)\bigr),
$
where
$$
h_{\mu,\gamma}(k):
\ell^{2}(\Z)
\to
\ell^{2}(\Z)
$$
is given by
\begin{align}\label{hmyufiber}
(h_{\mu,\gamma}(k)w)(r)=2(1+\gamma)w(r)&-(1+\gamma e^{-i k})w(r+1)\notag\\
&-(1+\gamma e^{i k})w(r-1)+\mu\,\delta_{r 0}\,w(r).
\end{align}
\end{lemma}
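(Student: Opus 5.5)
First I will compute how $\widehat h_{\mu,\gamma}$ acts in the relative coordinates. For $\widehat\varphi=U_0\widehat\psi$, i.e. $\widehat\psi(x_1,x_2)=\widehat\varphi(x_2,x_1-x_2)$, I evaluate $(U_0\widehat h_{\mu,\gamma}U_0^{-1}\widehat\varphi)(R,r)=(\widehat h_{\mu,\gamma}\widehat\psi)(R+r,R)$ term by term. The shift $x_1\mapsto x_1\pm1$ leaves $R$ fixed and sends $r\mapsto r\pm1$. The shift $x_2\mapsto x_2\pm1$ sends $R\mapsto R\pm1$ and $r\mapsto r\mp1$. The interaction $\delta_{x_1,x_2}$ becomes $\delta_{r,0}$. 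This gives
\[
\begin{aligned}
(U_0\widehat h_{\mu,\gamma}U_0^{-1}\widehat\varphi)(R,r)
={}&2(1+\gamma)\widehat\varphi(R,r)-\widehat\varphi(R,r+1)-\widehat\varphi(R,r-1)\\
&-\gamma\bigl[\widehat\varphi(R+1,r-1)+\widehat\varphi(R-1,r+1)\bigr]+\mu\delta_{r,0}\widehat\varphi(R,r).
\end{aligned}
\]

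Next I apply $\mathcal F_R$. The terms not involving a shift in $R$ pass through unchanged. For the shifted terms, the substitution $R'=R\pm1$ in the Fourier sum gives
\[
\mathcal F_R[\widehat\varphi(\cdot\pm1,r')](k)=e^{\pm ik}(\mathcal F_R\widehat\varphi)(k,r').
\]
Hence the term $\widehat\varphi(R+1,r-1)$ produces $e^{ik}w(r-1)$, and $\widehat\varphi(R-1,r+1)$ produces $e^{-ik}w(r+1)$, where $w=(\mathcal F_R\widehat\varphi)(k,\cdot)$. Grouping these terms yields exactly the coefficients $-(1+\gamma e^{-ik})$ on $w(r+1)$ and $-(1+\gamma e^{ik})$ on $w(r-1)$ in \eqref{hmyufiber}. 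So for every $\widehat\varphi$ in the dense set of finitely supported functions,
\[
(U\widehat h_{\mu,\gamma}U^{-1}f)(k)=h_{\mu,\gamma}(k)f(k)\quad\text{for a.e. } k, \qquad f=U\widehat\varphi.
\]

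The remaining points are functional-analytic. First, $U$ is unitary: $U_0$ is a bijective relabelling of $\mathbb Z^2$, and $\mathcal F_R$ is unitary by Parseval applied in $R$ for each fixed $r$. Second, each $h_{\mu,\gamma}(k)$ is bounded with $\|h_{\mu,\gamma}(k)\|\le 2(1+\gamma)+2(1+\gamma)+|\mu|$, uniformly in $k$. Third, $k\mapsto h_{\mu,\gamma}(k)$ is norm-continuous, since its entries are trigonometric polynomials in $k$, so the field is measurable. Consequently $\int^\oplus h_{\mu,\gamma}(k)\,dk$ is a bounded decomposable operator. It agrees with $U\widehat h_{\mu,\gamma}U^{-1}$ on a dense set, and both operators are bounded, so they are equal.

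The only delicate step is the bookkeeping of signs in the Fourier phase, i.e. deciding whether $\widehat\varphi(R+1,\cdot)$ gives $e^{ik}$ or $e^{-ik}$ under the convention $e^{-ikR}$, and matching the result with \eqref{fouriertranslation}. I would double-check this step against \eqref{fouriertranslation}, which states that the shift $R\mapsto R+t$ becomes multiplication by $e^{ikt}$. That consistency check confirms the placement of $e^{\mp ik}$ in \eqref{hmyufiber}.
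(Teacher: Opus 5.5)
Your proof is correct. Your explicit computation agrees with the paper's: the shift $R\mapsto R+s$ produces $e^{iks}$, hence $-(1+\gamma e^{-ik})$ on $w(r+1)$ and $-(1+\gamma e^{ik})$ on $w(r-1)$. The difference is in how the decomposition itself is justified.

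The paper first uses Lemma~\ref{unitaryT} and \eqref{fouriertranslation} to show that $U\widehat h_{\mu,\gamma}U^{-1}$ commutes with the multiplications $M_{e^{ikt}}$. It then invokes the abstract decomposability theorem (Reed--Simon, Theorem XIII.85) to obtain some measurable fiber family, and only afterwards identifies the fibers by computation. You bypass translation invariance and the abstract theorem entirely. You write down the candidate fibers and check that they are uniformly bounded and measurable in $k$. You then conclude from agreement on the dense set of transforms of finitely supported functions, together with boundedness of both sides.

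Your route is more elementary and self-contained. It also makes explicit two points the paper leaves tacit: measurability of the explicit family, and the a.e.\ identification of the abstractly obtained fibers with the computed ones. The paper's route explains structurally why a fiber decomposition must exist, and it would still work where the fibers cannot be written in closed form.

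Two small touch-ups:
\begin{enumerate}
\item Your $\widehat\varphi$ already lives in the $(R,r)$ variables, so the dense set should be described by $f=\mathcal F_R\widehat\varphi$ rather than $f=U\widehat\varphi$. This is harmless, since $U_0$ preserves finite support.
\item The norm continuity you claim follows from writing $h_{\mu,\gamma}(k)=A_0+e^{ik}A_1+e^{-ik}A_{-1}$ with fixed bounded $A_j$. It does not follow merely from the matrix entries being trigonometric polynomials, which for infinite matrices would only give weak measurability. Weak measurability is, however, all you actually need.
\end{enumerate}
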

\begin{proof}
Since $U_0$ and $\mathcal F_R$ are unitary, the operator
$$
U=\mathcal F_RU_0
$$
is unitary. By Lemma~\ref{unitaryT},
$$
T_t\widehat h_{\mu,\gamma}
=
\widehat h_{\mu,\gamma}T_t,
\qquad t\in\mathbb Z.
$$
In the $(R,r)$-coordinates using \eqref{fouriertranslation}, we obtain
$$
\mathcal F_{R}U_{0}T_{t}U_{0}^{-1}\mathcal F_{R}^{-1}
=
M_{e^{ikt}},
$$
where
$M_{e^{ikt}}$ denotes multiplication by $e^{ikt}$. Therefore,
$$
M_{e^{ikt}}
\bigl(U\widehat h_{\mu,\gamma}U^{-1}\bigr)
=
\bigl(U\widehat h_{\mu,\gamma}U^{-1}\bigr)
M_{e^{ikt}},
\qquad t\in\mathbb Z.
$$

Since the multiplication operators $M_{e^{ikt}}$, $t\in\mathbb Z$, generate the von Neumann algebra of
scalar multiplication operators on
$L^2(\mathbb T;\ell^2(\mathbb Z))$, the operator
$U\widehat h_{\mu,\gamma}U^{-1}$ is decomposable(see \cite{RSIV}, Theorem XIII.85). Hence there exists
a measurable family of bounded operators $h_{\mu,\gamma}(k)$ on
$\ell^2(\mathbb Z)$ such that
$$
U\widehat h_{\mu,\gamma}U^{-1}
=
\int_{\mathbb T}^{\oplus} h_{\mu,\gamma}(k)\,dk.
$$

Let $\widehat{\psi}\in\ell^2(\Z^2)$ and  $\widehat{\varphi}=U_0\widehat{\psi}.$ Thus
$$
(U_0\widehat{h}_{0,\gamma\,}U_0^{-1}\widehat{\varphi})(R,r)=\sum_{|s|\le1}\widehat{\varepsilon}(s)\,
[\widehat{\varphi}(R,r+s)+\gamma\widehat{\varphi}(R+s,r-s)].
$$
Applying $\mathcal{F}_{R}$ and changing the summation variable in the second term gives
$$
(U\widehat{h}_{0,\gamma\,}U^{-1}w)(k,r)=\sum_{|s|\le1}\widehat{\varepsilon}(s)\,[w(k,r+s)+\gamma e^{i ks}w(k,r-s)].
$$
Since $\widehat{\varepsilon}(0)=2$, $\widehat{\varepsilon}(\pm1)=-1,$ we have
\begin{equation}\label{hnolfiber}
(h_{0,\gamma\,}(k)w)(r)=2(1+\gamma)w(r)-(1+\gamma e^{-i k})w(r+1)-(1+\gamma e^{i k})w(r-1).
\end{equation}
Moreover, since $(\widehat{V}\widehat{\psi})(x_1,x_2)=\delta_{x_1, x_2}\widehat{\psi}(x_1,x_2)$ and $x_1-x_2=r$,
\begin{equation}\label{v1fiber}
(U\,\widehat{V}\,U^{-1}w)(r)=\delta_{r, 0}\,w(r).
\end{equation}
Consequently, combining the identities  \eqref{hnolfiber} and \eqref{v1fiber} we obtain, for a.e. $k$,
\begin{align*}
(h_{\mu,\gamma}(k)w)(r)=2(1+\gamma)w(r)&-(1+\gamma e^{-i k})w(r+1)\notag\\
&-(1+\gamma e^{i k})w(r-1)+\mu\,\delta_{r, 0}\,w(r),
\end{align*}
which is \eqref{hmyufiber}.
\end{proof}

\subsection{Phase transformation of the fiber operator}
In this section, we apply a unitary phase transformation to the fiber operator $h_{\mu,\gamma}(k)$. The transformation removes the complex phases from the nearest-neighbour hopping coefficients and converts them into the real symmetric coefficient.

Introduce the $k$-dependent coefficients
$$
c_{\pm}(k)=1+\gamma e^{\pm ik},
\qquad k\in\mathbb T,\quad \gamma>0,
$$
and set
\begin{equation}\label{agammak}
a_{\gamma}(k):=|c_{-}(k)|=|c_{+}(k)|=\sqrt{1+2\gamma\cos k+\gamma^2}.
\end{equation}
Then the fiber operator $h_{\mu,\gamma}(k)$ can be written as
$$
(h_{\mu,\gamma}(k)w)(r)
=
2(1+\gamma)w(r)
-
c_{-}(k)w(r+1)
-
c_{+}(k)w(r-1)
+
\mu\delta_{r, 0}w(r).
$$
Assume that $a_{\gamma}(k)>0,$ and define
$$
\vartheta_{\gamma}(k):=-\arg c_{-}(k).
$$
Since $c_{+}(k)=\overline{c_{-}(k)},$ we have
$$
e^{i\vartheta_{\gamma}(k)}c_{-}(k)
=
e^{-i\vartheta_{\gamma}(k)}c_{+}(k)
=
a_{\gamma}(k).
$$
Define
$$
G_k:\ell^2(\mathbb Z)\to\ell^2(\mathbb Z),
\qquad
(G_ku)(r):=e^{-i\vartheta_\gamma(k)r}u(r).
$$
\begin{lemma}\label{phasetransform}
Let $k\in\mathbb T$ satisfy $a_\gamma(k)>0$. Then $G_k$ is
unitary on $\ell^2(\mathbb Z)$, and the operator
$$
H_{\mu,\gamma}(k)
:=
G_kh_{\mu,\gamma}(k)G_k^{-1}
$$
acts as
\begin{equation}\label{fibersymmetric}
(H_{\mu,\gamma}(k)u)(r)
=
2(1+\gamma)u(r)
-a_\gamma(k)\bigl[u(r+1)+u(r-1)\bigr]
+\mu\delta_{r, 0}u(r).
\end{equation}
\end{lemma}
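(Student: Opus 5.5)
The proof is a direct computation in two steps: unitarity of $G_k$, then conjugation of each term of $h_{\mu,\gamma}(k)$.

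\emph{Unitarity.} $G_k$ is multiplication by the unimodular sequence $e^{-i\vartheta_\gamma(k)r}$. This phase is well defined because $a_\gamma(k)>0$ means $c_-(k)\neq0$, so $\arg c_-(k)$ makes sense. Hence $G_k$ is isometric. It has the inverse $(G_k^{-1}u)(r)=e^{i\vartheta_\gamma(k)r}u(r)$, which is again an isometry, so $G_k$ is unitary and $G_k^{-1}=G_k^*$.

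\emph{Conjugation.} Write $\vartheta=\vartheta_\gamma(k)$ and compute $(G_kh_{\mu,\gamma}(k)G_k^{-1}u)(r)$ term by term.

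The diagonal terms $2(1+\gamma)u(r)$ and $\mu\delta_{r,0}u(r)$ commute with any multiplication operator, so they are unchanged.

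For the forward hopping term,
\[
-e^{-i\vartheta r}c_-(k)e^{i\vartheta(r+1)}u(r+1)=-e^{i\vartheta}c_-(k)\,u(r+1)=-a_\gamma(k)u(r+1).
\]

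For the backward hopping term,
\[
-e^{-i\vartheta r}c_+(k)e^{i\vartheta(r-1)}u(r-1)=-e^{-i\vartheta}c_+(k)\,u(r-1)=-a_\gamma(k)u(r-1).
\]

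In both cases the last equality is the identity $e^{i\vartheta}c_-(k)=e^{-i\vartheta}c_+(k)=a_\gamma(k)$ stated before the lemma. It holds because $c_-(k)=a_\gamma(k)e^{-i\vartheta}$ by the definition of $\vartheta$, and $c_+(k)=\overline{c_-(k)}$. Summing the four contributions gives \eqref{fibersymmetric}.

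There is no real obstacle. The only point to check is the sign of the phase. The shift $r\mapsto r+1$ produces the factor $e^{+i\vartheta}$, and this must pair with $c_-$ (the coefficient of $w(r+1)$ in \eqref{hmyufiber}), which it does with the paper's choice $\vartheta=-\arg c_-(k)$.

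All operators involved are bounded, so the identity holds on all of $\ell^2(\mathbb Z)$ and no domain issues arise.
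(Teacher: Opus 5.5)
Your proposal is correct and matches the paper's proof: unitarity follows because $G_k$ multiplies by a unimodular phase, and then substituting $w=G_k^{-1}u$ and using $e^{i\vartheta_\gamma(k)}c_-(k)=e^{-i\vartheta_\gamma(k)}c_+(k)=a_\gamma(k)$ gives \eqref{fibersymmetric}. You also derive that identity from $\vartheta_\gamma(k)=-\arg c_-(k)$ and $c_+(k)=\overline{c_-(k)}$, and note that $a_\gamma(k)>0$ is what makes the argument well defined; these are small details the paper leaves implicit.
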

\begin{proof}
Since
$$
\bigl|e^{-i\vartheta_\gamma(k)r}\bigr|=1,
\qquad r\in\mathbb Z,
$$
the operator $G_k$ is unitary, with
$$
(G_k^{-1}u)(r)
=
e^{i\vartheta_\gamma(k)r}u(r).
$$
For $w=G_k^{-1}u$, direct substitution gives
$$
\begin{aligned}
(G_kh_{\mu,\gamma}(k)G_k^{-1}u)(r)
&=
2(1+\gamma)u(r)
-e^{i\vartheta_\gamma(k)}c_-(k)u(r+1)\\
&\quad
-e^{-i\vartheta_\gamma(k)}c_+(k)u(r-1)
+\mu\delta_{r, 0}u(r).
\end{aligned}
$$
Using
$$
e^{i\vartheta_\gamma(k)}c_-(k)
=
e^{-i\vartheta_\gamma(k)}c_+(k)
=
a_\gamma(k),
$$
we obtain  \eqref{fibersymmetric} representation of $H_{\mu,\gamma}(k)$.
\end{proof}
\begin{remark}
For $\gamma=1$ and $k\in(-\pi,\pi)$,
$$
c_\pm(k)
=
2\cos\frac{k}{2}\,e^{\pm ik/2},
\qquad
a_1(k)=2\cos\frac{k}{2},
$$
and therefore
$$
\vartheta_1(k)=\frac{k}{2},
\qquad
(G_ku)(r)=e^{-ikr/2}u(r).
$$
At $\gamma=1$ and $k=\pi$, the hopping coefficient $c_\pm(\pi)=0$. Hence
$$
h_{\mu,1}(\pi)=4I+\mu V.
$$
Thus, no phase transformation is required.
\end{remark}

\section{Spectral properties of the fermion-boson fiber operator}\label{sec:discrete-spectr}
In this section, we investigate the spectral properties of the fermion-boson fiber operator. In particular, we determine its essential spectrum and characterize its discrete eigenvalues.

By Lemmas~\ref{directintegral} and \ref{phasetransform}, the spectral
analysis of $\widehat h_{\mu,\gamma}$ reduces to the study of the
family of operators
$$
H_{\mu,\gamma}(k):\ell^2(\mathbb Z)\to\ell^2(\mathbb Z),
\qquad k\in\mathbb T,
$$
defined by
$$
H_{\mu,\gamma}(k)=H_{0,\gamma}(k)+\mu V, \qquad\mu\in \R,
$$
where
$$
(H_{0,\gamma}(k)u)(r)
=
2(1+\gamma)u(r)
-
a_{\gamma}(k)\bigl[u(r+1)+u(r-1)\bigr], \quad u\in \ell^{2}(\Z)
$$
with $a_{\gamma}(k)$ given by \eqref{agammak},
and
$$
(Vu)(r)=\delta_{r, 0}u(r), \quad u\in \ell^{2}(\Z).
$$
\subsection{Essential spectrum of the fiber operator $H_{\mu,\gamma}(k)$}
\begin{lemma}
For every $k\in\mathbb T$, the essential spectrum of $H_{\mu,\gamma}(k)$ is
$$
\sigma_{\mathrm{ess}}\bigl(H_{\mu,\gamma}(k)\bigr)=[\varepsilon_{\min}(k), \varepsilon_{\max}(k)] $$
where
$$\varepsilon_{\min}(k)=2(1+\gamma)-2a_{\gamma}(k) \quad\text{and}\quad   \varepsilon_{\max}(k)=2(1+\gamma)+2a_{\gamma}(k).$$
\end{lemma}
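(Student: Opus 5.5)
The plan is to use Weyl's theorem on the stability of the essential spectrum under compact perturbations. Since $V$ is the rank-one orthogonal projection onto $\delta_0$, the operator $\mu V$ is compact for every $\mu\in\mathbb R$. Hence
\[
\sigma_{\mathrm{ess}}\bigl(H_{\mu,\gamma}(k)\bigr)=\sigma_{\mathrm{ess}}\bigl(H_{0,\gamma}(k)\bigr),
\]
and it remains to compute the essential spectrum of the unperturbed operator.

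For the free operator I would pass to momentum space with the unitary Fourier transform $\mathcal F:\ell^2(\mathbb Z)\to L^2(\mathbb T)$ given by $(\mathcal Fu)(p)=(2\pi)^{-1/2}\sum_r e^{-ipr}u(r)$. The shifts $u(r\pm1)$ become multiplication by $e^{\pm ip}$, so $\mathcal F H_{0,\gamma}(k)\mathcal F^{-1}$ is multiplication by the continuous real function
\[
\varepsilon_k(p)=2(1+\gamma)-2a_\gamma(k)\cos p.
\]
The spectrum of a multiplication operator by a continuous function on $\mathbb T$ is the essential range of that function. Since $\cos p$ ranges over $[-1,1]$ and $a_\gamma(k)\ge0$, this gives exactly $[\varepsilon_{\min}(k),\varepsilon_{\max}(k)]$.

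To identify this spectrum as essential, I would argue in one of two ways.
\begin{itemize}
\item If $a_\gamma(k)>0$, the function $\varepsilon_k$ is nonconstant and analytic, so each level set $\{p:\varepsilon_k(p)=E\}$ has measure zero. The operator therefore has no eigenvalues, and its spectrum, having no isolated points, is purely essential.
\item Alternatively, one can construct Weyl sequences directly. For $E=2(1+\gamma)-2a_\gamma(k)\cos p_0$, take $u_n(r)=n^{-1/2}e^{ip_0r}\chi_{[1,n]}(r)$. Then $\|u_n\|=1$, $u_n\rightharpoonup0$, and $\|(H_{0,\gamma}(k)-E)u_n\|=O(n^{-1/2})$.
\end{itemize}
The Weyl-sequence construction has the advantage of working verbatim for the perturbed operator as well, since $u_n(0)=0$ means $Vu_n=0$. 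This gives the inclusion $\supseteq$ without invoking Weyl's theorem.

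The only subtle point is the exceptional fiber $(\gamma,k)=(1,\pi)$, where $a_\gamma(k)=0$. There Lemma~\ref{phasetransform} does not apply. However, by the Remark following it, $h_{\mu,1}(\pi)=4I+\mu V$ directly, so I would treat this case separately.
\begin{itemize}
\item For $H_{0}=4I$, the point $4$ is an eigenvalue of infinite multiplicity, hence lies in the essential spectrum, and $\sigma_{\mathrm{ess}}=\{4\}=[\varepsilon_{\min},\varepsilon_{\max}]$.
\item Adding the compact operator $\mu V$ preserves this, again by Weyl's theorem.
\end{itemize}
I expect no real obstacle. The main care needed is to state the convention that $H_{\mu,\gamma}(\pi)$ means $h_{\mu,1}(\pi)$ when $\gamma=1$, and to note that the Fourier computation gives the degenerate interval in that case.
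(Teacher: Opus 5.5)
Your proposal is correct and follows the same route as the paper. You Fourier transform $H_{0,\gamma}(k)$ into multiplication by $2(1+\gamma)-2a_\gamma(k)\cos p$, read off the range, and transfer the result to $H_{\mu,\gamma}(k)$ by Weyl's theorem because $V$ is rank one. Your extra check that the free spectrum is entirely essential is a welcome addition: there are no eigenvalues when $a_\gamma(k)>0$, and $4$ is an eigenvalue of infinite multiplicity when $a_\gamma(k)=0$. The paper only asserts this from the minimum and maximum of the symbol.
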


\begin{proof}
Under the Fourier transform
$$
(\mathcal Fu)(p)=\frac{1}{\sqrt{2\pi}}\sum_{r\in\mathbb Z}u(r)e^{-ipr},\qquad p\in\mathbb T,
$$
the  operator $H_{0,\gamma}(k)$
becomes the multiplication operator by
$$
\varepsilon_{k,\gamma}(p)=2(1+\gamma)-2a_{\gamma}(k)\cos p.
$$
That is,
$$
(\mathcal F H_{0,\gamma}(k)\mathcal F^{-1}f)(p)=\varepsilon_{k,\gamma}(p)f(p).
$$
Since
$$\varepsilon_{\min}(k)=\min\limits_{p\in\T}\varepsilon_{k,\gamma}(p), \quad  \varepsilon_{\max}(k)=\max\limits_{p\in\T}\varepsilon_{k,\gamma}(p),$$
it follows that
$$
\sigma_{\mathrm{ess}}\bigl(H_{0,\gamma}(k)\bigr)=
\bigl[2(1+\gamma)-2a_{\gamma}(k),\,2(1+\gamma)+2a_{\gamma}(k)\bigr].
$$
Moreover,
$$
(Vu)(r)=\delta_{r, 0}u(r),
$$
so $V$ is a rank-one operator and hence compact. Since
$$
H_{\mu,\gamma}(k)=H_{0,\gamma}(k)+\mu V,
$$
Weyl's theorem on the invariance of the essential spectrum under compact perturbations implies that
$$
\sigma_{\mathrm{ess}}\bigl(H_{\mu,\gamma}(k)\bigr)
=
\sigma_{\mathrm{ess}}\bigl(H_{0,\gamma}(k)\bigr).
$$
Consequently,
$$
\sigma_{\mathrm{ess}}\bigl(H_{\mu,\gamma}(k)\bigr)
=
\big[
2(1+\gamma)-2a_\gamma(k),
\,
2(1+\gamma)+2a_\gamma(k)
\big].
$$

\end{proof}

\subsection{Discrete spectrum outside the essential spectrum}
In this section, we study the discrete spectrum of the fiber operators $H_{\mu,\gamma}(k)$ associated with the two-particle Hamiltonian. We determine explicitly the unique isolated eigenvalue lying outside the essential spectrum and investigate its dependence on the interaction strength.

Recall that $$a_{\gamma}(k)=\sqrt{1+2\gamma\cos k+\gamma^2}.$$
\begin{theorem}\label{maintheorem}
Let $\gamma>0$, $k\in\mathbb{T}$ and $\mu\in\mathbb{R}\setminus\{0\}.$
Then the following assertions hold:
\begin{enumerate}
\item
If $a_{\gamma}(k)>0,$
then the operator $H_{\mu,\gamma}(k)$ has a unique simple eigenvalue
outside its essential spectrum, given by
\begin{equation}\label{eigenexplicitly}
E_{\gamma}(k,\mu)=2(1+\gamma) +\operatorname{sgn}(\mu)\sqrt{\mu^{2}+4a_{\gamma}(k)^{2}}.
\end{equation}
A corresponding eigenfunction is
\begin{equation}\label{eigenfunc}
u(r)=C\alpha^{|r|},\qquad C\neq0,\qquad r\in\Z,
\end{equation}
where
$$
\alpha=-\operatorname{sgn}(\mu)\frac{2a_{\gamma}(k)}{|\mu|+\sqrt{\mu^{2}+4a_{\gamma}(k)^{2}}},
\qquad
0<|\alpha|<1.
$$

\item
If $a_{\gamma}(k)=0,$
then necessarily
$\gamma=1,\, k=\pi,$
and $H_{\mu,1}(\pi)$ has the unique simple eigenvalue
$$
E_{1}(\pi,\mu)=4+\mu
$$
outside its essential spectrum. A corresponding eigenfunction is
$$
u(r)=C\delta_{r,0},
\qquad
C\neq 0.
$$
\end{enumerate}
In both cases, the eigenvalue lies below the essential spectrum for
$\mu<0$ and above it for $\mu>0$. Moreover the mapping
$$
\mu
\longmapsto
E_{\gamma}(k,\mu)
$$
is strictly increasing on $\mathbb{R}\setminus\{0\}$.
\end{theorem}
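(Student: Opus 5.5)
We split into the two cases of the statement. In the nondegenerate case $a_\gamma(k)>0$ we use the rank-one structure of $V$ and reduce the eigenvalue problem to a scalar equation. We then solve it and exhibit the eigenfunction. In the degenerate case the operator is diagonal.

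\textbf{Case $a:=a_\gamma(k)>0$.} Let $z\notin[\varepsilon_{\min}(k),\varepsilon_{\max}(k)]$ and write $z=2(1+\gamma)+s$ with $|s|>2a$. Consider the eigenvalue equation $H_{\mu,\gamma}(k)u=zu$ with $u\in\ell^2(\mathbb Z)$. For $r\neq0$ it reads
\[
-a\bigl[u(r+1)+u(r-1)\bigr]=s\,u(r).
\]
The characteristic equation is $a\alpha^2+s\alpha+a=0$, whose roots $\alpha,\alpha^{-1}$ are real with $|\alpha|<1$. Square summability forces $u(r)=C\alpha^{|r|}$ on each half-line. Continuity through $r=0$ then fixes the same constant on both sides: since the recursion at $r=\pm1$ involves $u(0)$, we get $u(r)=u(0)\alpha^{|r|}$. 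In particular $u(0)\neq0$ for a nonzero solution, which already gives simplicity.

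The equation at $r=0$ gives $-2a\alpha+\mu=s$. Combined with $a\alpha+a\alpha^{-1}=-s$, this yields $a(\alpha^{-1}-\alpha)=\mu$. This is a quadratic in $\alpha$, namely $a\alpha^2+\mu\alpha-a=0$, and its unique root in $(-1,1)\setminus\{0\}$ is
\[
\alpha=\frac{-\mu+\operatorname{sgn}(\mu)\sqrt{\mu^2+4a^2}}{2a}=-\operatorname{sgn}(\mu)\frac{2a}{|\mu|+\sqrt{\mu^2+4a^2}}.
\]
Then $s=\mu-2a\alpha$ simplifies to $\operatorname{sgn}(\mu)\sqrt{\mu^2+4a^2}$, which gives \eqref{eigenexplicitly}. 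Since $\sqrt{\mu^2+4a^2}>2a$, the eigenvalue lies strictly below or above the band according to the sign of $\mu$.

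Conversely, every isolated eigenvalue must arise this way, so it is unique. A cleaner cross-check is the resolvent criterion: $z$ is an eigenvalue iff
\[
1+\mu\,\bigl((H_{0,\gamma}(k)-z)^{-1}\delta_0,\delta_0\bigr)=0,
\]
and the Green's function equals $\mp1/\sqrt{s^2-4a^2}$. This gives $s^2-4a^2=\mu^2$ with $\operatorname{sgn}(s)=\operatorname{sgn}(\mu)$, matching the formula. Eigenvalues embedded in the band are not at issue, since the statement concerns only the spectrum outside the essential spectrum.

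\textbf{Case $a_\gamma(k)=0$.} We have $1+2\gamma\cos k+\gamma^2=(\gamma-1)^2+2\gamma(1+\cos k)$, a sum of nonnegative terms. It vanishes iff $\gamma=1$ and $\cos k=-1$, that is, $k=\pi$. Then, by the Remark, the operator is $4I+\mu V$, which is diagonal with eigenvalue $4$ of infinite multiplicity (the essential spectrum $\{4\}$) and eigenvalue $4+\mu$ with eigenvector $\delta_0$. The eigenvalue $4+\mu$ is simple, since its eigenspace is spanned by $\delta_0$, and it lies below or above $\{4\}$ according to the sign of $\mu$.

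\textbf{Monotonicity.} On $\mu>0$ and on $\mu<0$, the map $\mu\mapsto\operatorname{sgn}(\mu)\sqrt{\mu^2+4a^2}$ is increasing, because the derivative $|\mu|/\sqrt{\mu^2+4a^2}$ is positive. Across the gap at $0$, values for $\mu<0$ are at most $2(1+\gamma)-2a$, while values for $\mu>0$ are at least $2(1+\gamma)+2a$. Hence the map is strictly increasing on $\mathbb R\setminus\{0\}$; the degenerate case $4+\mu$ is trivial.

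The main obstacle is the uniqueness and simplicity argument. The cleanest route is to show that any $\ell^2$ solution is determined by $u(0)$ via the half-line recursion. The only delicate point is that the growing root must be excluded, which holds because $|\alpha|\neq1$ off the band.
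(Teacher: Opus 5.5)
Your route is the paper's (decaying root on each half-line, matching at $r=0$, quadratic for $\alpha$), and your final formulas are right. However, the central computation contains two sign errors that cancel.

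From $s=\mu-2a\alpha$ and $s=-a(\alpha+\alpha^{-1})$ one gets $\mu=a(\alpha-\alpha^{-1})$, not $a(\alpha^{-1}-\alpha)=\mu$. The correct quadratic is therefore $a\alpha^2-\mu\alpha-a=0$, as in the paper. Its root in $(-1,1)$ is $\frac{\mu-\operatorname{sgn}(\mu)\sqrt{\mu^2+4a^2}}{2a}=-\operatorname{sgn}(\mu)\frac{2a}{|\mu|+\sqrt{\mu^2+4a^2}}$.

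The expression you write, $\frac{-\mu+\operatorname{sgn}(\mu)\sqrt{\mu^2+4a^2}}{2a}$, is indeed the small root of your quadratic $a\alpha^2+\mu\alpha-a=0$. But it equals $+\operatorname{sgn}(\mu)\frac{2a}{|\mu|+\sqrt{\mu^2+4a^2}}$, so the equality in your display is false. Only this second slip restores the correct $\alpha$.

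The sign is not cosmetic. The relation $s=-a(\alpha+\alpha^{-1})$ forces $\operatorname{sgn}(\alpha)=-\operatorname{sgn}(s)$, and this is what puts the eigenvalue on the correct side of the band. If you inserted the root of your stated quadratic into $s=\mu-2a\alpha$, you would get $s=2\mu-\operatorname{sgn}(\mu)\sqrt{\mu^2+4a^2}$. For small $|\mu|$ this lies in $(-2a,2a)$, i.e. inside the band, contradicting your setup. Replace the relation by $\mu=a(\alpha-\alpha^{-1})$ and the derivation goes through exactly as in the paper.

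With that correction the proposal is complete, and some of your additions go beyond the paper:
\begin{itemize}
\item Your resolvent criterion is correct, with $\bigl((H_{0,\gamma}(k)-z)^{-1}\delta_0,\delta_0\bigr)=-\operatorname{sgn}(s)/\sqrt{s^2-4a^2}$. Since every eigenvector satisfies $u=-\mu u(0)(H_{0,\gamma}(k)-z)^{-1}\delta_0$, this Birman--Schwinger argument alone gives existence, uniqueness and simplicity, independently of the recurrence. The paper does not use it, and it would have caught the sign error.
\item Your sum-of-squares identity justifies $a_\gamma(k)=0\iff(\gamma,k)=(1,\pi)$, which the paper's proof merely asserts.
\item Your comparison of values across the gap at $\mu=0$ is genuinely needed for strict monotonicity on the disconnected set $\mathbb{R}\setminus\{0\}$. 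The paper only records $\partial_\mu E_\gamma(k,\mu)>0$ on each half-line.
\end{itemize}
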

\begin{proof}
 We seek $E\in \R$ and $u\neq 0, u\in\ell^2(\Z)$ such that
\begin{equation}\label{fb-eig}
H_{\mu,\gamma}(k)u = Eu.
\end{equation}
For  $|r|\ge1$, the eigenvalue equation \eqref{fb-eig} has the form
\begin{equation*}\label{rnotzero}
Eu(r) = 2(1+\gamma)u(r)-a_{\gamma}(k)\bigl[u(r+1)+u(r-1)\bigr].
\end{equation*}
At $r=0$ the interaction contributes, and we obtain
\begin{equation*}\label{onsite}
Eu(0) = 2(1+\gamma)u(0)-a_{\gamma}(k)\bigl[u(1)+u(-1)\bigr]+ \mu u(0).
\end{equation*}
Therefore, the eigenvalue equation is equivalent to the system
\begin{equation}\label{fbeigensystem1}
\begin{cases}
E\,u(r) = 2(1+\gamma)u(r)-a_{\gamma}(k)\bigl[u(r+1)+u(r-1)\bigr], & |r| \geq 1, \\
E\,u(0) = 2(1+\gamma)u(0)-a_{\gamma}(k)\bigl[u(1)+u(-1)\bigr]+ \mu u(0), & r=0.
\end{cases}
\end{equation}

 {\bf (i)} Let $a_\gamma(k)>0.$
For $r\neq 0$, the eigenvalue equation in
\eqref{fbeigensystem1} is equivalent to the second-order recurrence
$$
a_\gamma(k)u(r+1)
-\bigl(2(1+\gamma)-E\bigr)u(r)
+a_\gamma(k)u(r-1)=0.
$$
The corresponding characteristic equation is
\begin{equation}\label{characequation}
a_\gamma(k)\lambda^2
-\bigl(2(1+\gamma)-E\bigr)\lambda
+a_\gamma(k)=0.
\end{equation}
Since $E$ lies outside the essential spectrum,
$$
\left|2(1+\gamma)-E\right|>2a_{\gamma}(k),
$$
the equation \eqref{characequation} has two distinct real reciprocal roots.
Let $\alpha$ be the unique root satisfying
$$0<|\alpha|<1.$$
Square summability eliminates the growing root on each half-line. Hence
$$
u(r)=
\begin{cases}
C\alpha^r, & r\geq1,\\[1mm]
D\alpha^{|r|}, & r\leq-1.
\end{cases}
$$
The equations at $r=1$ and $r=-1$ give
$$
C=D=u(0).
$$
Since $u\neq 0$,  necessarily  $u(0)\neq 0$. Therefore,
\begin{equation}\label{eigenfunctionform}
u(r)=u(0)\alpha^{|r|},\qquad 0<|\alpha|<1,
\qquad r\in\mathbb Z.
\end{equation}
In particular, every eigenfunction corresponding to an eigenvalue outside the essential spectrum is even.

The equation  at $r=0$ of system \eqref{fbeigensystem1} gives
\begin{equation}\label{eigenform1}
E=2(1+\gamma)-2a_\gamma(k)\alpha+\mu.
\end{equation}
On the other hand, the characteristic equation \eqref{characequation} with the solution $\alpha$
gives
\begin{equation}\label{eigenform2}
E=2(1+\gamma)-a_\gamma(k)\left(\alpha+\alpha^{-1}\right).
\end{equation}
Equating the right-hand sides of \eqref{eigenform1} and \eqref{eigenform2}, we obtain
\begin{equation}\label{quadraticequat}
a_\gamma(k)\alpha^2-\mu\alpha-a_\gamma(k)=0.
\end{equation}
The roots of \eqref{quadraticequat} are
$$
\alpha_{\pm}
=
\frac{\mu\pm\sqrt{\mu^2+4a_\gamma(k)^2}}
{2a_\gamma(k)}.
$$
Since
$$
\alpha_{+}\alpha_{-}=-1,
$$
exactly one of the root satisfies
$0<|\alpha|<1.$ For $\mu\neq 0$, this root is
$$
\alpha=\frac{\mu-\operatorname{sgn}(\mu)\sqrt{\mu^{2}+4a_{\gamma}(k)^{2}}}{2a_{\gamma}(k)}\\
=-\operatorname{sgn}(\mu)\frac{2a_{\gamma}(k)}{|\mu|+\sqrt{\mu^{2}+4a_{\gamma}(k)^{2}}}.
$$
Hence,
$$0<|\alpha|=\frac{2a_{\gamma}(k)}{|\mu|+\sqrt{\mu^{2}+4a_{\gamma}(k)^{2}}}<1,
\qquad
\operatorname{sgn}(\alpha)=-\operatorname{sgn}(\mu).
$$
Substituting this value of
$\alpha$ into \eqref{eigenform1}, we obtain
$$
\begin{aligned}
E_{\gamma}(k,\mu):=E=
2(1+\gamma)
+\operatorname{sgn}(\mu)\sqrt{\mu^2+4a_\gamma(k)^2}.
\end{aligned}
$$
Moreover,
$$
\operatorname{sgn}(\mu)
\bigl(E_{\gamma}(k,\mu)-2(1+\gamma)\bigr)
=
\sqrt{\mu^{2}+4a_{\gamma}(k)^{2}}
>
2a_{\gamma}(k).
$$
Hence,
the eigenvalue lies below the essential spectrum for $\mu<0$ and above it for $\mu>0.$

The uniqueness of the admissible root $\alpha$ excludes any other eigenvalue outside the essential spectrum, while \eqref{eigenfunctionform} shows that the corresponding eigenspace is one-dimensional. Hence, $E_{\gamma}(k,\mu)$ is a simple eigenvalue.

{\bf (ii)}
 Let $a_{\gamma}(k)=0$. Then necessarily $\gamma=1$ and $k=\pi$, and
$$\sigma_{\mathrm{ess}}\bigl(H_{\mu,1}(\pi)\bigr)=\{4\}.$$
In this case, the
system \eqref{fbeigensystem1} reduces to
\begin{equation}\label{fbeigensystem2}
\begin{cases}
(E-4)u(r)=0, & |r| \geq 1, \\
(E-4-\mu)u(0)=0, & r=0.
\end{cases}
\end{equation}
For $E\neq 4$, the first equation in \eqref{fbeigensystem2} implies
$$
u(r)=0,
\qquad |r|\geq 1.
$$
Since $u\neq 0$, we  have $u(0)\neq 0$, and the second equation of \eqref{fbeigensystem2} yields
$$
E_{1}(\pi,\mu)=4+\mu.
$$
The corresponding eigenfunction is
$$
u(r)=C\delta_{r,0},
\qquad C\neq 0.
$$
Moreover,
$$
\operatorname{sgn}(\mu)
\bigl(E_{1}(\pi,\mu)-4\bigr)
=
|\mu|>0.
$$
Therefore, $E_{1}(\pi,\mu)$ lies below the essential spectrum for
$\mu<0$ and above it for $\mu>0$. Moreover,
\eqref{fbeigensystem2} excludes the existence of any other eigenvalue outside
the essential spectrum and shows that the corresponding eigenspace is
one-dimensional. Hence, $E_{1}(\pi,\mu)$ is the unique simple eigenvalue of
$H_{\mu,1}(\pi)$ outside its essential spectrum.

Finally, if \(a_{\gamma}(k)>0\), then
\[
\frac{\partial E_{\gamma}(k,\mu)}{\partial\mu}
=
\frac{|\mu|}
{\sqrt{\mu^2+4a_{\gamma}(k)^2}}
>0,
\qquad
\mu\neq0,
\]
whereas for \(a_{\gamma}(k)=0\),
\[
E_1(\pi,\mu)=4+\mu.
\]
Therefore, the function
$$
\mu\longmapsto E_{\gamma}(k,\mu)
$$
is strictly increasing on $\R\setminus\{0\}$.
\end{proof}
\begin{corollary}[\bf Strong-coupling asymptotics]
Let $\gamma>0$ and $k\in\mathbb{T}$. As $|\mu|\to\infty$,
$$
E_{\gamma}(k,\mu)
=
\mu+2(1+\gamma)+O\bigl(|\mu|^{-1}\bigr),
$$
and, provided $a_{\gamma}(k)>0$,
$$
\alpha
=
-\frac{a_{\gamma}(k)}{\mu}
+O\bigl(|\mu|^{-3}\bigr)
\longrightarrow 0.
$$
Consequently, the normalized eigenfunction
$$
u(r)
=
\left(
\frac{1-\alpha^{2}}{1+\alpha^{2}}
\right)^{1/2}
\alpha^{|r|},
\qquad r\in\Z,
$$
converges in $\ell^{2}(\Z)$ to
$\delta_{r,0}.$
Thus, in the strong-coupling limit, the bound state becomes localized at
the interaction site $r=0$.
\end{corollary}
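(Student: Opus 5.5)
The plan is to work directly from the explicit formulas of Theorem~\ref{maintheorem}, treating the case $a_\gamma(k)=0$ separately. In that case $E_1(\pi,\mu)=4+\mu$, which is exactly $\mu+2(1+\gamma)$ with $\gamma=1$. The eigenfunction is already $C\delta_{r,0}$, so nothing remains to prove there. From now on assume $a:=a_\gamma(k)>0$ and write $E_\gamma(k,\mu)=2(1+\gamma)+\operatorname{sgn}(\mu)\sqrt{\mu^2+4a^2}$.

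\textbf{Eigenvalue.} Since $\operatorname{sgn}(\mu)|\mu|=\mu$, I would rationalize:
\[
\operatorname{sgn}(\mu)\sqrt{\mu^2+4a^2}-\mu
=\operatorname{sgn}(\mu)\frac{4a^2}{|\mu|+\sqrt{\mu^2+4a^2}}.
\]
The absolute value of the right-hand side is at most $2a^2/|\mu|$. This gives $E_\gamma(k,\mu)=\mu+2(1+\gamma)+O(|\mu|^{-1})$.

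\textbf{Decay parameter.} Write $t=2a/|\mu|$, so that
\[
\alpha=-\operatorname{sgn}(\mu)\frac{t}{1+\sqrt{1+t^2}}.
\]
Since $1+\sqrt{1+t^2}=2+t^2/2+O(t^4)$, we get $t/(1+\sqrt{1+t^2})=t/2-t^3/8+O(t^5)$. Substituting back,
\[
\alpha=-\operatorname{sgn}(\mu)\frac{a}{|\mu|}+O(|\mu|^{-3})=-\frac{a}{\mu}+O(|\mu|^{-3}),
\]
and in particular $\alpha\to0$.

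\textbf{Normalization.} The geometric sum is $\sum_{r\in\Z}\alpha^{2|r|}=1+2\alpha^2/(1-\alpha^2)=(1+\alpha^2)/(1-\alpha^2)$. Hence $u(r)=\bigl((1-\alpha^2)/(1+\alpha^2)\bigr)^{1/2}\alpha^{|r|}$ has unit norm.

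\textbf{Convergence in $\ell^2(\Z)$.} Set $N=\bigl((1-\alpha^2)/(1+\alpha^2)\bigr)^{1/2}$. Then
\[
\|u-\delta_{\cdot,0}\|^2=(N-1)^2+N^2\cdot\frac{2\alpha^2}{1-\alpha^2}.
\]
Both terms tend to $0$ as $\alpha\to0$, because $N\to1$. Alternatively, since both vectors have unit norm, $\|u-\delta\|^2=2-2u(0)=2-2N\to0$, which is quicker.

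None of these steps poses a real obstacle. The only point needing care is the sign bookkeeping with $\operatorname{sgn}(\mu)$ in the expansion of $\alpha$, and the $O(|\mu|^{-3})$ error term, which requires carrying the expansion to third order in $t$.
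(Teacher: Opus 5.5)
Your proposal is correct and takes the same route as the paper, which combines $\sqrt{\mu^2+4a_\gamma(k)^2}=|\mu|+O(|\mu|^{-1})$ with the explicit formulas for $E_\gamma(k,\mu)$, $\alpha$ and the eigenfunction from Theorem~\ref{maintheorem}. You supply the details the paper leaves implicit: the rationalized bound $2a_\gamma(k)^2/|\mu|$, the expansion of $\alpha$ in $t=2a_\gamma(k)/|\mu|$, the geometric-sum normalization, the $\ell^2(\Z)$ estimate, and the separate treatment of the case $a_\gamma(k)=0$.
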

\begin{proof}
The result follows directly from
$$
\sqrt{\mu^{2}+4a_{\gamma}(k)^{2}}
=
|\mu|+O\bigl(|\mu|^{-1}\bigr)
\qquad\text{as } |\mu|\to\infty,
$$
and the explicit formulas for $E_{\gamma}(k,\mu),$ $\alpha$  and the eigenfunction given in Theorem~\ref{maintheorem}.
\end{proof}
\begin{remark}
The  case $a_{\gamma}(k)=0$, which corresponds to
$\gamma=1$ and $k=\pi$, is obtained as the limit of the formulas derived
for $a_{\gamma}(k)>0$. Let $\mu\neq 0$ be fixed. For
$k\in(-\pi,\pi)$,
$$
a_{1}(k)=2\cos\frac{k}{2}>0,
$$
and
$$
\alpha(k)=-\operatorname{sgn}(\mu)
\frac{2a_{1}(k)}{|\mu|+\sqrt{\mu^{2}+4a_{1}(k)^{2}}}
\longrightarrow 0
\qquad\text{as } k\to\pi.
$$
Consequently,
$$
E_{1}(k,\mu)=4+\operatorname{sgn}(\mu)\sqrt{\mu^{2}+4a_{1}(k)^{2}}
\longrightarrow
4+\mu=E_{1}(\pi,\mu).
$$
and for $C\neq0,$
$$
C\alpha(k)^{|r|}\longrightarrow C\delta_{0}
\quad\text{as } k\to\pi.
$$
\end{remark}

\subsection{Threshold resonances and asymptotics}
Let
$$\ell^{\infty}(\Z)=\{u:\Z\to\C:\sup_{r\in\Z}|u(r)|<\infty\}$$
be the Banach space of bounded complex-valued functions on $\Z$.

We introduce the notation
$$
E_{\mathrm{thr}}^{-}(k):=\varepsilon_{\min}(k)=2(1+\gamma)-2a_{\gamma}(k),
\qquad
E_{\mathrm{thr}}^{+}(k):=\varepsilon_{\max}(k)=2(1+\gamma)+2a_{\gamma}(k),
$$
for the lower and upper thresholds of the essential spectrum.
\begin{definition}
Let $\gamma>0,$ $k\in\T$ and $a_{\gamma}(k)>0$. We say that
$H_{\mu,\gamma}(k)$ has a resonance at a threshold
$E_{\mathrm{thr}}^{\pm}(k)$
if the equation
$$
H_{\mu,\gamma}(k)u=E_{\mathrm{thr}}^{\pm}(k)u
$$
has a nonzero solution
$$
u\in\ell^{\infty}(\Z)
\setminus
\ell^{2}(\Z).
$$
Such a solution is called a resonant solution.
\end{definition}
The following theorem characterizes the threshold resonances of $H_{\mu,\gamma}(k).$
\begin{theorem}
Let $\gamma>0$, $k\in\mathbb{T}$, and assume that
$
a_{\gamma}(k)>0.
$
Then the threshold points $E_{\mathrm{thr}}^{\pm}(k)$ are resonances of
$H_{\mu,\gamma}(k)$ if and only if $\mu=0$. The
corresponding resonant solutions are given by
\begin{equation}\label{resonancefunction}
u_{\mathrm{res}}^{\pm}(r)=C(\mp 1)^{r},\qquad C\neq0,
\qquad r\in\Z.
\end{equation}
\end{theorem}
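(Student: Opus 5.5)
Fix $\gamma>0$, $k\in\mathbb T$ with $a_\gamma(k)>0$, and one of the two thresholds. The plan is to solve the threshold equation $H_{\mu,\gamma}(k)u=E_{\mathrm{thr}}^{\pm}(k)u$ explicitly, exactly as in the proof of Theorem~\ref{maintheorem}. The only difference is that now the characteristic equation has a double root on the unit circle.

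\textbf{Step 1: the equation away from the origin.} Substituting $E=E_{\mathrm{thr}}^{\pm}(k)=2(1+\gamma)\pm2a_\gamma(k)$ into the first line of \eqref{fbeigensystem1} and dividing by $a_\gamma(k)>0$, the equations for $|r|\ge1$ become
$$u(r+1)\pm2u(r)+u(r-1)=0.$$
The characteristic equation \eqref{characequation} is then $(\lambda\pm1)^2=0$, with the double root $\lambda=\mp1$. Hence on each half-line the solution has the form
$$u(r)=(\mp1)^r(A_\pm+B_\pm r), \qquad \pm r\ge0 \text{ (left/right half-lines separately)},$$
extended consistently through $r=0$. More precisely, the recurrence at $r=1$ involves $u(0),u(1),u(2)$, and the recurrence at $r=-1$ involves $u(-2),u(-1),u(0)$. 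So the right branch is determined by $(u(0),u(1))$ and the left branch by $(u(0),u(-1))$, and both branches share the value $u(0)$.

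\textbf{Step 2: boundedness and the coupling at the origin.} Boundedness forces the linear coefficients to vanish on both half-lines. This gives $u(r)=u(0)(\mp1)^{r}$ for all $r\in\mathbb Z$, with $u(0)\neq0$ since $u\neq0$. Now insert this into the $r=0$ equation of \eqref{fbeigensystem1}:
$$E\,u(0)=2(1+\gamma)u(0)-a_\gamma(k)\bigl[u(1)+u(-1)\bigr]+\mu u(0).$$
We have $u(1)+u(-1)=\mp2u(0)$, so the right-hand side equals $(2(1+\gamma)\pm2a_\gamma(k)+\mu)u(0)=(E+\mu)u(0)$. Therefore $\mu u(0)=0$, which forces $\mu=0$.

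\textbf{Step 3: both directions.} If $\mu\neq0$, Step 2 shows the only bounded solution is $u=0$, so there is no resonance. If $\mu=0$, the function $u_{\mathrm{res}}^{\pm}(r)=C(\mp1)^r$ solves the equation at every site, including $r=0$. It lies in $\ell^\infty(\mathbb Z)$, and since $|u(r)|=|C|\neq0$ for all $r$, it is not in $\ell^2(\mathbb Z)$. Step 2 also shows that these are all the resonant solutions, which gives \eqref{resonancefunction}.

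The main point of care is Step 2. One must argue that boundedness kills the term linear in $r$ independently on each half-line. One must also check that gluing the two branches at $r=0$ leaves no extra freedom. This works because both branches are pinned by the common value $u(0)$: once the linear parts vanish, the right branch forces $u(1)=\mp u(0)$ and the left branch forces $u(-1)=\mp u(0)$.
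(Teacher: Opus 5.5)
Your proposal is correct and follows the paper's proof essentially step for step: the double characteristic root $\lambda=\mp1$ at $E_{\mathrm{thr}}^{\pm}(k)$, boundedness eliminating the linearly growing term on each half-line, gluing to $u(r)=u(0)(\mp1)^{r}$ with $u(0)\neq0$, the $r=0$ equation reducing to $\mu u(0)=0$, and a direct check of the converse. The only cosmetic difference is that you include $r=0$ in each half-line sequence, so the matching with $u(0)$ is built in; the paper instead writes the solutions on $r\ge1$ and $r\le-1$ and obtains $C=D=u(0)$ from the equations at $r=\pm1$.
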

\begin{proof}
At $E=E_{\mathrm{thr}}^{\pm}(k)$, the characteristic equation corresponding to
\eqref{fbeigensystem1} becomes
$$
a_{\gamma}(k)(\lambda\pm 1)^{2}=0.
$$
Thus, the characteristic root is $\lambda=\mp 1$ and has multiplicity two.
Consequently, on each half-line the general solution contains a term that
grows linearly in $|r|$. Since a resonance function must be bounded, the
linearly growing terms vanish. Hence, every bounded solution has the form
$$
u(r)=
\begin{cases}
C(\mp 1)^{r}, & r\geq 1,\\[1mm]
D(\mp 1)^{r}, & r\leq -1.
\end{cases}
$$
The equations at $r=1$ and $r=-1$ imply
$$
C=D=u(0).
$$
Therefore,
$$
u(r)=u(0)(\mp 1)^{r},
\qquad r\in\Z.
$$
Substituting this expression into the equation at $r=0$ in \eqref{fbeigensystem1}, we obtain
$$
\mu u(0)=0.
$$
Since every nontrivial bounded solution satisfies $u(0)\neq 0$, it follows
that
$$
\mu=0.
$$

Conversely, if $\mu=0$, then the functions
$$
u_{\mathrm{res}}^{\pm}(r)=C(\mp 1)^{r},
\qquad r\in\Z,
$$
satisfy
$$
H_{0,\gamma}(k)u_{\mathrm{res}}^{\pm}
=
E_{\mathrm{thr}}^{\pm}(k)u_{\mathrm{res}}^{\pm},
$$
and
$$
u_{\mathrm{res}}^{\pm}\in
\ell^{\infty}(\Z)
\setminus
\ell^{2}(\Z).
$$
Thus, $E_{\mathrm{thr}}^{\pm}(k)$ are resonances if and only if $\mu=0$.
\end{proof}

\begin{corollary}[\bf Threshold asymptotics]
Let $\gamma>0$, $k\in\mathbb{T}$, and $a_{\gamma}(k)>0.$
As $\mu\to 0^{\pm},$ the discrete
eigenvalue satisfies
$$
E_{\gamma}(k,\mu)
\longrightarrow
E_{\mathrm{thr}}^{\pm}(k),
$$
and
$$
\left|
E_{\gamma}(k,\mu)-E_{\mathrm{thr}}^{\pm}(k)\right|=\frac{\mu^2}{4a_{\gamma}(k)}+O(\mu^{4}).
$$
Furthermore, choosing $C=1$ in the representations of the eigenfunction
\eqref{eigenfunc} and the corresponding resonance solution \eqref{resonancefunction}, we have
$$
u(r)\longrightarrow u_{\mathrm{res}}^{\pm}(r)=(\mp 1)^{r}\quad\text{as } \mu\to0^{\pm}
$$
pointwise on $\Z.$
\end{corollary}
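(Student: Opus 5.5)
The argument uses only the explicit formulas of Theorem~\ref{maintheorem}(i) and a Taylor expansion of the square root around $\mu=0$. Write $a:=a_\gamma(k)>0$ and fix the sign $\pm=\operatorname{sgn}(\mu)$. Then
\[
E_\gamma(k,\mu)-E_{\mathrm{thr}}^{\pm}(k)=\operatorname{sgn}(\mu)\bigl(\sqrt{\mu^2+4a^2}-2a\bigr),
\]
so it suffices to expand $\sqrt{\mu^2+4a^2}-2a$ for small $\mu$.

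\textbf{Eigenvalue.} Rationalize the difference:
\[
\sqrt{\mu^2+4a^2}-2a=\frac{\mu^2}{\sqrt{\mu^2+4a^2}+2a}.
\]
The denominator equals $4a+O(\mu^2)$, since $\sqrt{\mu^2+4a^2}=2a\sqrt{1+\mu^2/(4a^2)}=2a+\mu^2/(4a)+O(\mu^4)$. Hence the quotient is $\mu^2/(4a)+O(\mu^4)$, with the implied constant depending on $a$; it behaves like $a^{-3}$. The difference is nonnegative, so its absolute value gives the claimed formula, and the convergence $E_\gamma(k,\mu)\to E_{\mathrm{thr}}^{\pm}(k)$ follows immediately. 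Note that this difference has sign $\operatorname{sgn}(\mu)$, which matches the eigenvalue lying outside the band.

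\textbf{Eigenfunction.} With $C=1$ the eigenfunction is $u(r)=\alpha^{|r|}$, where
\[
\alpha=-\operatorname{sgn}(\mu)\,\frac{2a}{|\mu|+\sqrt{\mu^2+4a^2}}.
\]
As $\mu\to0^\pm$, the fraction tends to $2a/(2a)=1$, so $\alpha\to\mp1$. For each fixed $r$, the map $\alpha\mapsto\alpha^{|r|}$ is continuous, so $u(r)\to(\mp1)^{|r|}$. Finally, $(\mp1)^{|r|}=(\mp1)^{r}$ because $(-1)^{|r|}=(-1)^r$. This is $u_{\mathrm{res}}^{\pm}(r)$ with $C=1$, giving the pointwise convergence.

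The only delicate point is bookkeeping of the sign convention, which I would fix at the start: $\mu\to0^+$ pairs with the upper threshold and $u_{\mathrm{res}}^{+}=(-1)^r$, while $\mu\to0^-$ pairs with the lower threshold and $u_{\mathrm{res}}^-\equiv1$. This is consistent with $\operatorname{sgn}\alpha=-\operatorname{sgn}\mu$. I would also remark that the convergence is only pointwise, not in $\ell^2$: $|\alpha|\to1$, so the $\ell^2$ norm of $u$ blows up.
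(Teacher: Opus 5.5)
Your proof is correct and follows essentially the same route as the paper. Both read off $|E_\gamma(k,\mu)-E_{\mathrm{thr}}^{\pm}(k)|=\sqrt{\mu^2+4a_\gamma(k)^2}-2a_\gamma(k)$ from the explicit eigenvalue formula and expand the square root; your rationalization step is only a cosmetic variant. Both also get the pointwise convergence $u(r)\to(\mp1)^r$ from $\alpha\to\mp1$. Your extra remarks are correct and go slightly beyond the paper: the $O(\mu^4)$ constant scales like $a_\gamma(k)^{-3}$, and the convergence cannot hold in $\ell^2(\Z)$.
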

\begin{proof}
By explicit form of the eigenvalue \eqref{eigenexplicitly}, as $\mu\to 0^{\pm}$,
$$
E_{\gamma}(k,\mu)
\longrightarrow
2(1+\gamma)\pm 2a_{\gamma}(k)
=
E_{\mathrm{thr}}^{\pm}(k),
$$
and
\[
\begin{aligned}
\left|E_{\gamma}(k,\mu)-E_{\mathrm{thr}}^{\pm}(k)\right|
&=\sqrt{\mu^{2}+4a_{\gamma}(k)^{2}}-2a_{\gamma}(k)\\
&=2a_{\gamma}(k)\left(\sqrt{1+\frac{\mu^{2}}{4a_{\gamma}(k)^{2}}}-1\right)\\
&=\frac{\mu^2}{4a_{\gamma}(k)}+O(\mu^{4})
\end{aligned}
\]
Moreover,
$$
\alpha
=
-\operatorname{sgn}(\mu)
\frac{2a_{\gamma}(k)}
{|\mu|+\sqrt{\mu^{2}+4a_{\gamma}(k)^{2}}}
\longrightarrow
\mp 1
\qquad
\text{as }\mu\to 0^{\pm}.
$$
Thus, choosing $C=1$ in \eqref{eigenfunc} and  \eqref{resonancefunction}, we obtain
the pointwise convergence
$$
u(r)=\alpha^{|r|}
\longrightarrow
(\mp 1)^{r}
=
u_{\mathrm{res}}^{\pm}(r)
$$
\end{proof}

\subsection{Joint threshold asymptotics near the exceptional fiber}
The preceding corollary concerns the weak-coupling limit for fixed
\(a_{\gamma}(k)>0\). Since the coefficient
\[
\frac{1}{4a_{\gamma}(k)}
\]
in the leading term becomes singular as \(a_{\gamma}(k)\to0\), the
corresponding expansion is not uniform with respect to \((\gamma,k)\).
Since
\[
a_{\gamma}(k)^2
=
1+2\gamma\cos k+\gamma^2
=
(\gamma-1)^2+4\gamma\cos^2\frac{k}{2},
\]
hence
\[
a_{\gamma}(k)\to0
\qquad\text{as}\qquad
(\gamma,k)\to(1,\pi).
\]
Moreover, the width of the essential spectrum
\[
\sigma_{\mathrm{ess}}\bigl(H_{\mu,\gamma}(k)\bigr)
=
\bigl[
2(1+\gamma)-2a_{\gamma}(k),\,
2(1+\gamma)+2a_{\gamma}(k)
\bigr]
\]
is equal to \(4a_{\gamma}(k),\) and  also tends to zero.
This leads naturally to study to the joint limit
\[
\mu\to0^{\pm},
\qquad
a_{\gamma}(k)\to0,
\]
in which the asymptotic behavior depends on the relative rates at which
\(\mu\) and \(a_{\gamma}(k)\) vanish.
\begin{theorem}
Let \(a_{\gamma}(k)>0\) and consider the joint limit
\[
\mu\to0^{\pm},
\qquad
a_{\gamma}(k)\to0,
\]
and the sign in \(E_{\mathrm{thr}}^{\pm}(k)\) corresponds to the sign of \(\mu\).
Then the following assertions hold.
\begin{enumerate}
\item If
\[\frac{|\mu|}{a_{\gamma}(k)}\to0,\]
then
\[
\left|E_{\gamma}(k,\mu)-E_{\mathrm{thr}}^{\pm}(k)\right|=
\frac{\mu^2}{4a_{\gamma}(k)}+O\left(\frac{\mu^4}{a_{\gamma}(k)^3}\right),
\]
and
\[
\alpha=-\operatorname{sgn}(\mu)+\frac{\mu}{2a_{\gamma}(k)}
+O\left(\frac{|\mu|^2}{a_{\gamma}(k)^2}\right).
\]
\item If
\[
\frac{|\mu|}{a_{\gamma}(k)}
\to c,
\qquad
0<c<\infty,
\]
then
\[
\left|
E_{\gamma}(k,\mu)-E_{\mathrm{thr}}^{\pm}(k)
\right|
=
a_{\gamma}(k)
\left(
\sqrt{c^2+4}-2
\right)
+
o\bigl(a_{\gamma}(k)\bigr),
\]
and
\[
\alpha
=
-\frac{2\operatorname{sgn}(\mu)}{c+\sqrt{c^2+4}}
+
o(1).
\]
\item If
\[
\frac{|\mu|}{a_{\gamma}(k)}
\to\infty,
\]
then
\[
\left|
E_{\gamma}(k,\mu)-E_{\mathrm{thr}}^{\pm}(k)
\right|
=
|\mu|
-
2a_{\gamma}(k)
+
\frac{2a_{\gamma}(k)^2}{|\mu|}
+
O\left(
\frac{a_{\gamma}(k)^4}{|\mu|^3}
\right),
\]
and
\[
\alpha
=
-\operatorname{sgn}(\mu)
\frac{a_{\gamma}(k)}{|\mu|}
+
O\left(
\frac{a_{\gamma}(k)^3}{|\mu|^3}
\right).
\]
\end{enumerate}
\end{theorem}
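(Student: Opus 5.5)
The plan is to reduce everything to a single scaling variable and then Taylor-expand one-variable functions. By Theorem~\ref{maintheorem}(i), for $a_\gamma(k)>0$ and $\mu\neq0$ the eigenvalue and the decay parameter are given by closed formulas. Set
\[
t:=\frac{|\mu|}{a_{\gamma}(k)}>0 .
\]
Factor $a_\gamma(k)$ out of the square root in \eqref{eigenexplicitly}, and recall that the sign of the threshold matches the sign of $\mu$. This gives the exact identities
\[
\left|E_{\gamma}(k,\mu)-E_{\mathrm{thr}}^{\pm}(k)\right| = a_{\gamma}(k)\,f(t),\qquad f(t):=\sqrt{t^2+4}-2,
\]
\[
\alpha=-\operatorname{sgn}(\mu)\,g(t),\qquad g(t):=\frac{2}{t+\sqrt{t^2+4}} .
\]
The essential point is that $f$ and $g$ depend only on $t$, and not separately on $\mu$, $\gamma$, $k$. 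Hence every remainder estimate obtained for $f$ and $g$ as $t\to0$, $t\to c$, or $t\to\infty$ is automatically uniform in how the joint limit $\mu\to0^\pm$, $(\gamma,k)\to(1,\pi)$ is taken. The hypothesis $\mu\to0$, $a_\gamma(k)\to0$ then enters only through the behaviour of $t$.

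For case (1), $t\to0$, I use the analytic expansions on $[0,1]$:
\[
f(t)=\frac{t^2}{4}+O(t^4),\qquad g(t)=1-\frac{t}{2}+O(t^2).
\]
The first follows from $\sqrt{4+t^2}=2\sqrt{1+t^2/4}$. The second follows from $g(t)=\tfrac12\bigl(\sqrt{t^2+4}-t\bigr)$, which is obtained by rationalizing. Multiplying $f$ by $a_\gamma(k)$ gives
\[
\frac{\mu^2}{4a_\gamma(k)}+O\!\left(\frac{\mu^4}{a_\gamma(k)^3}\right).
\]
In the expansion of $\alpha$, the term $\operatorname{sgn}(\mu)\,t/2$ equals $\mu/(2a_\gamma(k))$, and the remainder is $O(\mu^2/a_\gamma(k)^2)$.

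For case (2), $t\to c\in(0,\infty)$, I use only that $f$ and $g$ are continuous on $(0,\infty)$. Then $f(t)=\sqrt{c^2+4}-2+o(1)$, so multiplying by $a_\gamma(k)$ gives the stated $a_\gamma(k)(\sqrt{c^2+4}-2)+o(a_\gamma(k))$. Similarly $g(t)=2/(c+\sqrt{c^2+4})+o(1)$.

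For case (3), $t\to\infty$, I expand in $s=1/t\to0$ with $t\ge1$:
\[
\sqrt{t^2+4}=t\sqrt{1+4s^2}=t+\frac{2}{t}-\frac{2}{t^3}+O(t^{-5}).
\]
This gives $f(t)=t-2+2/t+O(t^{-3})$. Multiplying by $a_\gamma(k)$ yields
\[
|\mu|-2a_\gamma(k)+\frac{2a_\gamma(k)^2}{|\mu|}+O\!\left(\frac{a_\gamma(k)^4}{|\mu|^3}\right).
\]
For the decay parameter I write $g(t)=\tfrac12(\sqrt{t^2+4}-t)=1/t-1/t^3+O(t^{-5})$. This gives
\[
\alpha=-\operatorname{sgn}(\mu)\frac{a_\gamma(k)}{|\mu|}+O\!\left(\frac{a_\gamma(k)^3}{|\mu|^3}\right).
\]

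The computations are elementary, so the only point that needs care is the uniformity of the $O$-terms. The implicit constants must not depend on the path of the joint limit. I handle this by stating each Taylor remainder bound on a fixed $t$-interval with an absolute constant: $[0,1]$ in case (1) and $[1,\infty)$ in case (3), using the analyticity of $\sqrt{1+x}$ near $x=0$. Since $t$ eventually enters these intervals under the respective hypotheses, the bounds then hold with absolute constants, and that is all the statement requires.
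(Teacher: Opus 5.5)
Your proposal is correct and follows essentially the same route as the paper. Both arguments start from the exact identities $|E_{\gamma}(k,\mu)-E_{\mathrm{thr}}^{\pm}(k)|=\sqrt{\mu^2+4a_{\gamma}(k)^2}-2a_{\gamma}(k)$ and $\alpha=-\operatorname{sgn}(\mu)\,2a_{\gamma}(k)/(|\mu|+\sqrt{\mu^2+4a_{\gamma}(k)^2})$, factor out $a_{\gamma}(k)$, and expand the square root in $|\mu|/a_{\gamma}(k)$ or its reciprocal. Your functions $f,g$ and the rationalization $g(t)=\tfrac12(\sqrt{t^2+4}-t)$ only make the uniformity of the remainders explicit.

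One small precision concerns case (3): on $t\ge1$ the variable $4/t^2$ ranges up to $4$, which is beyond the radius of convergence of $\sqrt{1+x}$. Justify the absolute remainder constants by Taylor's theorem on the compact interval $[0,4]$ instead, or simply work on $t\ge4$, which the hypothesis $t\to\infty$ allows.
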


\begin{proof}
From the explicit formulas for the eigenvalue and the eigenfunction
parameter, we have
\[
\left|
E_{\gamma}(k,\mu)-E_{\mathrm{thr}}^{\pm}(k)
\right|
=
\sqrt{\mu^2+4a_{\gamma}(k)^2}-2a_{\gamma}(k)
\]
and
\[
\alpha
=
-\operatorname{sgn}(\mu)
\frac{2a_{\gamma}(k)}
{|\mu|+\sqrt{\mu^2+4a_{\gamma}(k)^2}}.
\]

{\bf (i)} Suppose first that
\[
\frac{|\mu|}{a_{\gamma}(k)}\to0.
\]
Using the expansion of the square root, we obtain
\[
\begin{aligned}
\sqrt{\mu^2+4a_{\gamma}(k)^2}
&=
2a_{\gamma}(k)
\sqrt{1+\frac{\mu^2}{4a_{\gamma}(k)^2}}
\\
&=
2a_{\gamma}(k)
+
\frac{\mu^2}{4a_{\gamma}(k)}
+
O\left(
\frac{\mu^4}{a_{\gamma}(k)^3}
\right).
\end{aligned}
\]
Hence
\[
\left|
E_{\gamma}(k,\mu)-E_{\mathrm{thr}}^{\pm}(k)
\right|
=
\frac{\mu^2}{4a_{\gamma}(k)}
+
O\left(
\frac{\mu^4}{a_{\gamma}(k)^3}
\right).
\]
Furthermore,
\[
\begin{aligned}
\alpha
&=
\frac{-2\operatorname{sgn}(\mu)}
{
\frac{|\mu|}{a_{\gamma}(k)}+\sqrt{\frac{|\mu|^2}{a_{\gamma}(k)^2}+4}
}
\\
&=
-\operatorname{sgn}(\mu)
+
\frac{\mu}{2a_{\gamma}(k)}
+
O\left(
\frac{|\mu|^2}{a_{\gamma}(k)^2}
\right).
\end{aligned}
\]

{\bf (ii)} Now let
\[
\frac{|\mu|}{a_{\gamma}(k)}\to c,
\qquad
0<c<\infty.
\]
Then
\[
\begin{aligned}
\left|
E_{\gamma}(k,\mu)-E_{\mathrm{thr}}^{\pm}(k)
\right|
&=
a_{\gamma}(k)
\left(
\sqrt{\frac{\mu^2}{a_{\gamma}(k)^2}+4}-2
\right)
\\
&=
a_{\gamma}(k)\left(\sqrt{c^2+4}-2\right)
+
o(a_{\gamma}(k)),
\end{aligned}
\]
while
\[
\begin{aligned}
\alpha
&=
\frac{-2\operatorname{sgn}(\mu)}
{
|\mu|/a_{\gamma}(k)+\sqrt{\mu^2/a_{\gamma}(k)^2+4}
}
\\
&=
\frac{-2\operatorname{sgn}(\mu)}{c+\sqrt{c^2+4}}
+
o(1).
\end{aligned}
\]

{\bf (iii)} Finally, suppose that
\[
\frac{|\mu|}{a_{\gamma}(k)}\to\infty.
\]
Since \(a_{\gamma}(k)/|\mu|\to0\),
\[
\begin{aligned}
\sqrt{\mu^2+4a_{\gamma}(k)^2}
&=
|\mu|
\sqrt{1+\frac{4a_{\gamma}(k)^2}{\mu^2}}
\\
&=
|\mu|
+
\frac{2a_{\gamma}(k)^2}{|\mu|}
+
O\left(
\frac{a_{\gamma}(k)^4}{|\mu|^3}
\right).
\end{aligned}
\]
Consequently,
\[
\left|
E_{\gamma}(k,\mu)-E_{\mathrm{thr}}^{\pm}(k)
\right|
=
|\mu|
-
2a_{\gamma}(k)
+
\frac{2a_{\gamma}(k)^2}{|\mu|}
+
O\left(
\frac{a_{\gamma}(k)^4}{|\mu|^3}
\right).
\]
Similarly,
\[
\begin{aligned}
\alpha
&=
-\operatorname{sgn}(\mu)
\frac{2a_{\gamma}(k)/|\mu|}
{
1+\sqrt{1+4a_{\gamma}(k)^2/\mu^2}
}
\\
&=
-\operatorname{sgn}(\mu)
\frac{a_{\gamma}(k)}{|\mu|}
+
O\left(
\frac{a_{\gamma}(k)^3}{|\mu|^3}
\right).
\end{aligned}
\]
The proof is complete.
\end{proof}
\begin{remark}
The theorem resolves the nonuniformity of the weak-coupling threshold
asymptotics near the exceptional fiber. The parameter \(a_{\gamma}(k)\) is the nearest-neighbor hopping coefficient in the fiber Hamiltonian  and determines the width
\(4a_{\gamma}(k)\) of the essential spectral band. Hence
\(
|\mu|/ a_{\gamma}(k)
\)
measures the interaction strength  relative to the hopping scale.
Depending on its limiting behavior, the eigenfunction approaches the threshold
resonant solution, retains a nontrivial exponentially decaying form, or becomes
localized at the interaction site. Thus the joint limit
\[
\mu\to0^{\pm},
\qquad
(\gamma,k)\to(1,\pi),
\]
is path dependent and describes the transition from threshold
resonance to site localization near the exceptional fiber.
\end{remark}

\section*{ACKNOWLEDGMENTS}
The author is grateful to Professor S.N. Lakaev for valuable discussions
and helpful comments on the manuscript.
\section*{FUNDING}
This work was supported by the Foundation for Fundamental Research of the Republic of Uzbekistan (grant No. FL-9524115052).

\end{document}